\documentclass[12pt, a4paper, twoside]{article}
\usepackage{amsmath,amsfonts,amssymb,amsthm,amscd}
\usepackage{epsfig}
\usepackage{hyperref}
\usepackage{enumerate} 
\sloppy
\newtheorem{theorem}{Theorem}

\newtheorem{lemma}[theorem]{Lemma}
\newtheorem{corollary}[theorem]{Corollary}
\newtheorem{conjecture}{Conjecture}
\newtheorem{observation}[theorem]{Observation}

\textwidth      16cm 
\textheight     22cm 
\topmargin      11mm

\oddsidemargin      2mm 
\evensidemargin    -3mm 

\headheight     0.mm
\headsep        0.mm

\def\l{\left}
\def\r{\right}

\def\bsd#1{{\underline{#1}}}
\def\bsu#1{{\overline{#1}}}
\def\flipped#1{{\overline{#1}}}

\begin{document}

\title{Average number of flips in pancake sorting \\[20pt]}

\author{Josef Cibulka\\
\small{Department of Applied Mathematics, Charles University,} \\
\small{Malostransk\'e n\'am.~25, 118~ 00 Prague, Czech Republic. }\\
\small{\it cibulka@kam.mff.cuni.cz}
\thanks{Work on this paper was supported by the project 1M0545 
of the Ministry of Education of the Czech Republic
and by the Czech Science Foundation under the contract no.\ 201/09/H057.
The access to the METACentrum computing facilities provided 
under the research intent MSM6383917201 is highly appreciated.}
}
\date{}
\maketitle

\begin{abstract}
We are given a stack of pancakes of different sizes and the only
allowed operation is to take several pancakes from top and flip them.
The unburnt version requires the pancakes to be sorted by their sizes at 
the end, while in the burnt version they additionally 
need to be oriented burnt-side down. 
We present an algorithm with the average number of flips, needed to sort a stack of $n$ 
burnt pancakes, equal to $7n/4 + O(1)$ and a randomized algorithm for the unburnt 
version with at most $17n/12 + O(1)$ flips on average.

In addition, we show that in the burnt version, the average number of flips of any algorithm is 
at least $n+\Omega(n/\log n)$ and conjecture that some algorithm can reach $n+\Theta(n/\log n)$.

We also slightly increase the lower bound on $g(n)$, the minimum number of flips needed
to sort the worst stack of $n$ burnt pancakes. This bound together with the upper bound found 
by Heydari and Sudborough in 1997 gives the exact number of flips to sort the previously
conjectured worst stack $-I_n$ for $n \equiv 3 \pmod 4$ and $n \geq 15$.

Finally we present exact values of $f(n)$ up to $n=19$ and of $g(n)$ up to $n=17$ and 
disprove a conjecture of Cohen and Blum by showing that the burnt stack $-I_{15}$ is not the worst one 
for $n=15$.
\end{abstract}

\emph{Keywords\/}: Pancake problem, Burnt pancake problem, Permutations, Prefix reversals, Average-case analysis

\section{Introduction}
The pancake problem was first posed in \cite{Dweighter}.
We are given a stack of pancakes each two of which have different sizes and
our aim is to sort them in as few operations as possible to obtain a stack of pancakes with 
sizes increasing from top to bottom. The only allowed sorting operation 
is a "spatula flip", in which a spatula is inserted beneath an arbitrary pancake, 
all pancakes above the spatula are lifted and replaced in reverse order. 

We can see the stack as a
permutation $\pi$. A flip is then a prefix reversal of the permutation.
The set of all permutations on $n$ elements is denoted by $S_n$, $f(\pi)$ is the minimum number 
of flips needed to obtain $(1,2,3,\dots,n)$ from $\pi$ and 
\[
f(n) := \max_{\pi \in S_n}f(\pi).
\]


The exact values of $f(n)$ are known for all $n\leq 19$, see 
Table~\ref{table:val} for their list and references.
In general $15\lfloor n/14\rfloor \leq f(n) \leq 18n/11 + O(1)$. 
The upper bound is due to Chitturi et al.~\cite{Chitturi+2008} 
and the lower bound was proved by Heydari and Sudborough~\cite{HeydariSudb}.
These bounds improved the previous bounds $17n/16 \leq f(n) \leq (5n+5)/3$ 
due to Gates and Papadimitriou~\cite{GatesPapad}, where the upper bound was also
independently found by Gy\"{o}ri and Tur\'{a}n~\cite{GyoriTuran}.


A related problem in which the reversals are not restricted to intervals containing the first 
element received considerable attention in computational biology; see e.\ g.~\cite{Hayes2007}. 


A variation on the pancake problem is the burnt pancake problem in which pancakes are burnt on one of their sides. 
This time, the aim is not only to sort them by their sizes, but we also require that at the end, they all have
their burnt sides down. Let $C=(\pi,v)$ denote a stack of $n$ burnt pancakes,
where $\pi \in S_n$ is the permutation of the pancakes and $v \in \{0,1\}^n$ is the vector of their orientations
($v_i=0$ if the $i$-th pancake from top is oriented burnt side down). 
Pancake $i$ will be represented by $\bsd i$ if its burnt side is down and $\bsu i$ if up.
Let 
\[
I_n=
\l(
\begin{array}{c}
	\bsd{1} \\ \bsd{2} \\ \vdots \\ \bsd{n}
\end{array}
\r)
\qquad \text{and} \qquad
-I_n=
\l(
\begin{array}{c}
	\bsu{1} \\ \bsu{2} \\ \vdots \\ \bsu{n}
\end{array}
\r).
\]

Let $g(C)$ be the minimum number of flips needed to obtain $I_n$ from $C$ and let
\[
g(n) := \max_{\pi \in S_n, v \in \{0,1\}^n}g((\pi,v)).
\]


Exact values of $g(n)$ are known for all $n\leq 17$, see Table~\ref{table:val}.
In 1979 Gates and Papadimitriou~\cite{GatesPapad} provided the bounds $3n/2-1 \leq g(n) \leq 2n+3$. 
Since then these were improved only slightly by Cohen and Blum~\cite{CohenBlum} to $3n/2 \leq g(n) \leq 2n-2$, 
where the upper bound holds for $n \geq 10$.
The result $g(16)=26$ further improves the upper bound to $2n-6$ for $n\geq 16$.
Cohen and Blum also conjectured that the maximum number of flips is always achieved for the stack $-I_n$.
But we present two counterexamples with $n=15$ in Section~\ref{sec:comp}.

The stack $-I_n$ can be sorted in $(3(n+1))/2$ flips for 
$n \equiv 3 \pmod 4$ and $n \geq 23$~\cite{HeydariSudb}.
In Section~\ref{sec:lb} we present a new formula for
determining a lower bound on the number of flips needed to sort a given stack of burnt pancakes. The highest
value that this formula gives for a stack of $n$ pancakes, is $\lfloor (3(n+1))/2 \rfloor$ for the stack $-I_n$.
These bounds together with the known values of $g(-I_{15})$ and $g(-I_{19})$ give
$g(-I_n)=(3(n+1))/2$ if $n \equiv 3\pmod 4$ and $n \geq 15$.

\begin{table}[ht]
\centering
\begin{tabular}{r rl rl rl}
$n$ & $f(n)$ & & $g(n)$ & & $g(-I_n)$ & \\ 
\hline
2 &    1 & \cite{Garey+1977}   &    4 & \cite{CohenBlum}       &    4 & \cite{CohenBlum} \\
3 &    3 & \cite{Garey+1977}   &    6 & \cite{CohenBlum}       &    6 & \cite{CohenBlum} \\
4 &    4 & \cite{Garey+1977}   &    8 & \cite{CohenBlum}       &    8 & \cite{CohenBlum} \\
5 &    5 & \cite{Garey+1977}   &   10 & \cite{CohenBlum}       &   10 & \cite{CohenBlum} \\
6 &    7 & \cite{Garey+1977}   &   12 & \cite{CohenBlum}       &   12 & \cite{CohenBlum} \\
7 &    8 & \cite{Garey+1977}   &   14 & \cite{CohenBlum}       &   14 & \cite{CohenBlum} \\
8 &    9 & \cite{Robbins1979}  &   15 & \cite{CohenBlum}       &   15 & \cite{CohenBlum} \\
9 &   10 & \cite{Robbins1979}  &   17 & \cite{CohenBlum}       &   17 & \cite{CohenBlum} \\
10&   11 & \cite{CohenBlum}    &   18 & \cite{CohenBlum}       &   18 & \cite{CohenBlum} \\
11&   13 & \cite{CohenBlum}    &   19 & \cite{Korf2008}        &   19 & \cite{CohenBlum} \\
12&   14 & \cite{HeydariSudb}  &   21 & \cite{Korf2008}        &   21 & \cite{CohenBlum} \\
13&   15 & \cite{HeydariSudb}  &   22 & Section~\ref{sec:comp} &   22 & \cite{CohenBlum} \\
14&   16 & \cite{Kounoike+2005}&   23 & Section~\ref{sec:comp} &   23 & \cite{CohenBlum} \\
15&   17 & \cite{Kounoike+2005}&   25 & Section~\ref{sec:comp} &   24 & \cite{CohenBlum} \\
16&   18 & \cite{Asai+2006}    &   26 & Section~\ref{sec:comp} &   26 & \cite{CohenBlum} \\
17&   19 & \cite{Asai+2006}    &   28 & Section~\ref{sec:comp} &   28 & \cite{CohenBlum} \\
18&   20 & Section~\ref{sec:comp} &   &                        &   29 & \cite{CohenBlum} \\
19&   22 & Section~\ref{sec:comp} &   &                        &   30 & Section~\ref{sec:comp} \\
20&      &                     &      &                        &   32 & Section~\ref{sec:comp} \\
$n\equiv  3 \pmod 4$ & & & & & $\lfloor\frac{3n+3}{2}\rfloor$ & Corollary~\ref{cor:cbexact} \\
\end{tabular}
\caption{known values of $f(n)$, $g(n)$ and $g(-I_n)$}
\label{table:val}
\end{table}

%


We present an algorithm that needs on average $7n/4 + O(1)$ flips to sort a stack of $n$ burnt
pancakes and a randomized algorithm for sorting $n$ unburnt pancakes with $17n/12 + O(1)$ flips 
on average. 
We also show that any algorithm for the unburnt version requires on average at least $n-O(1)$ flips and 
in the burnt version $n+\Omega(n/\log n)$ flips are needed on average. Section~\ref{sec_concl} 
introduces a conjecture that the average number of flips of the optimal algorithm for sorting burnt 
pancakes is $n+\Theta(n/\log n)$.

\section{Terminology and notation}
The stack obtained by flipping the whole stack $C$ is $\flipped C$. The stack $-C$ 
is obtained from $C$ by changing the orientation of each pancake while keeping 
the order of pancakes.

If two unburnt pancakes of consecutive sizes are located next to each other, they
are \emph{adjacent}. Two burnt pancakes located next to each other are \emph{adjacent} if they form 
a substack of $I_n$ or of $\flipped{I_n}$. Two burnt pancakes located next to each other 
are \emph{anti-adjacent} if they form a substack of $-I_n$ or of $\flipped{-I_n}$.

In both versions a \emph{block} in a stack $C$ is an inclusion-wise maximal substack $S$ of $C$
such that each two pancakes of $S$ on consecutive positions are adjacent.
A substack $S$ of a stack $C$ with burnt pancakes is 
called a \emph{clan}, if $-S$ is a block in $-C$.
Pancake not taking part in a block or a clan is \emph{free}.

If the top $i$ pancakes are flipped, the flip is an \emph{$i$-flip}.

\section{Lower bound in the burnt version}
\label{sec:lb}
\begin{theorem}
\label{thm:blb}
For each $n$ 
\[
g(-I_n) \geq \l\lfloor \frac{3(n + 1)}{2}\r\rfloor.
\]
\end{theorem}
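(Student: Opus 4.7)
The plan is to construct a potential function $\phi$ on burnt stacks that satisfies: $\phi(I_n)=0$, every flip decreases $\phi$ by at most one, and $\phi(-I_n)=\lfloor 3(n+1)/2\rfloor$. The first two properties force $g(C)\geq \phi(C)$ for every stack $C$, and the third then yields the theorem directly. A key observation underlying everything is that a flip of the top $k$ pancakes touches only the gap between positions $k$ and $k{+}1$ of the stack: gaps strictly inside the reversed prefix are preserved because reversing simultaneously the order and the orientations of a substack carries adjacency to adjacency and anti-adjacency to anti-adjacency. The gap at position $k$ itself is replaced by the new gap between the flipped $\bsu{C_1}$ and $C_{k+1}$, and, when $k=n$, the bottom pancake of the stack is replaced as well.

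For the design of $\phi$ I would start from a plain breakpoint count (a non-adjacent gap contributes $1$, plus a boundary term $[C_n\neq \bsd{n}]$) and observe that this only reaches the bound $n$ on $-I_n$. To reach $\lfloor 3(n+1)/2\rfloor$, anti-adjacencies must be weighted more heavily than plain breakpoints: assigning weight $3/2$ to an anti-adjacent gap and $1$ to a gap that is neither adjacent nor anti-adjacent makes the $n-1$ anti-adjacencies of $-I_n$ alone contribute $3(n-1)/2$. One then tunes small boundary terms (at least for the bottom, possibly for the top) and, if needed, auxiliary structural terms involving the clans and free pancakes defined in Section~2, so that $\phi(I_n)=0$ and $\phi(-I_n)$ hits the exact value $\lfloor 3(n+1)/2\rfloor$, with the floor reflecting the parity of $n$.

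The main obstacle will be proving that every flip decreases $\phi$ by at most one. This requires a case analysis on the flip size $k$ and on the types (adjacent, anti-adjacent, breakpoint) of the old gap between $C_k$ and $C_{k+1}$ and the new gap between $\bsu{C_1}$ and $C_{k+1}$, together with the special cases $k=1$ (which only re-orients the top pancake and modifies the first internal gap) and $k=n$ (which swaps top and bottom, modifying the bottom boundary term). The most delicate situation is a flip that converts an anti-adjacency directly into an adjacency, since under the proposed weights this would naively save $3/2$ in one step. The compensation has to come either from a matching increase in a boundary term or from an auxiliary clan/free-pancake count baked into $\phi$; calibrating these ingredients so that the net drop per flip is always at most one, while simultaneously landing on the exact value $\lfloor 3(n+1)/2\rfloor$ at $-I_n$, is where the bulk of the work lies.

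Once the monotonicity property is in place, the remaining verifications are short: $\phi(I_n)=0$ because all weighted terms vanish on the sorted stack, and $\phi(-I_n)$ is evaluated by summing the $n-1$ anti-adjacency contributions plus the (parity-dependent) boundary contributions, giving exactly $\lfloor 3(n+1)/2\rfloor$ and thus the theorem.
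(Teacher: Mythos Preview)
Your proposal is essentially the paper's approach: a potential-function lower bound where anti-adjacencies are penalised more than ordinary breakpoints, with auxiliary structural and boundary terms to make the per-flip inequality go through. The paper realises this slightly differently from your sketch. Rather than weighting anti-adjacencies at $3/2$ and plain breakpoints at $1$, it defines an \emph{antisymmetric} score
\[
v(C)=a(C)-a^-(C)\;-\;\tfrac13\bigl(b(C)-b^-(C)\bigr)\;+\;\tfrac13\bigl(o(C)-o^-(C)\bigr)\;+\;l(C)-l^-(C)\;+\;\tfrac13\bigl(ll(C)-ll^-(C)\bigr),
\]
where $a,a^-$ count adjacencies and anti-adjacencies, $b,b^-$ count \emph{deep} blocks and clans, and $o,l,ll$ are the top/bottom boundary indicators (with $x^-(C):=x(-C)$ throughout). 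The key lemma is that a single flip increases $v$ by at most $4/3$; after rescaling by $3/4$ this is your ``drop at most one''. Two features of the paper's construction are worth flagging against your outline. First, the built-in antisymmetry $v(-C)=-v(C)$ halves the case analysis: whenever a case is governed by, say, $\Delta l^->0$, one passes to the flip $-C'\to -C$ and reduces to $\Delta l>0$. Second, the compensation you anticipate for the dangerous anti-adjacency~$\to$~adjacency flip comes precisely from the deep block/clan terms: in that situation a deep block is necessarily created and a deep clan destroyed, contributing $-1/3-1/3$ and pulling the net change back to $4/3$. The remaining cases (whole-stack flips affecting $l,ll$; flips that move pancake~$1$ affecting $o$) are handled one by one. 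With $v(I_n)=n+2/3$ and $v(-I_n)=-(n+2/3)$ one gets $\lceil \tfrac34\cdot 2(n+\tfrac23)\rceil=\lfloor 3(n+1)/2\rfloor$.
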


\begin{proof}
{\ }\par
The claim is easy to verify for $n \leq 2$, so we can assume $n \geq 3$.

A block (clan) is called a \emph{surface block (clan)} if the topmost pancake is part of it, 
otherwise it is \emph{deep}.

We will assign to each stack $C$ the value $v(C)$:
\[
v(C) := a(C)-a^-(C) - \frac13 (b(C)-b^-(C)) + \frac13 (o(C)-o^-(C)) + l(C)-l^-(C) + \frac13 (ll(C)-ll^-(C)),
\]
where
\begin{align*}
a(C) &:= \text{number of adjacencies} \\
b(C) &:= \text{number of deep blocks} \\
o(C) &:= \l\{ \begin{array}{ll} 
1 & \text{if the pancake on top of the stack is the free $\overline{1}$ or} \\
  & \text{if $1$ is in a block (necessarily with $2$)} \\
0 & \text{otherwise} \\
\end{array}\r. \\
l(C) &:= \l\{ \begin{array}{ll}
1 & \text{if the lowest pancake is $\underline{n}$} \\
0 & \text{otherwise}\\
\end{array}\r. \\
ll(C) &:= \l\{ \begin{array}{ll} 
1 & \text{if the lowest pancake is $\bsd n$ and the second lowest is $\bsd{n-1}$} \\
0 & \text{otherwise}\\
\end{array}\r. \\
a^-(C) &:= a(-C) = \text{number of anti-adjacencies in $C$}\\
b^-(C) &:= b(-C) = \text{number of deep clans in $C$}\\
o^-(C) &:= o(-C) \\
l^-(C) &:= l(-C) \\
ll^-(C) &:= ll(-C). \\
\end{align*}

\begin{lemma}
\label{lem:blb}
If $C$ and $C'$ are stacks of at least two pancakes and $C'$ can be obtained from $C$ by a single flip, then 
\[\Delta v := v(C') - v(C) \leq \frac43.\]
Therefore the minimum number of flips needed to sort a stack $C$ is at least 
\[ \l\lceil \frac34 (v(I_n) - v(C)) \r\rceil .\]
\end{lemma}

\begin{proof}
{\ }\par
First we introduce notation for contributions of each of the functions to $\Delta v$:
\begin{align*}
\Delta a &:= a(C') - a(C) &                   \Delta a^- &:= -(a^-(C') - a^-(C)) \\
\Delta b &:= -\frac13 (b(C') - b(C)) &        \Delta b^- &:= \frac13 (b^-(C') - b^-(C)) \\
\Delta o &:= \frac13 (o(C') - o(C)) &         \Delta o^- &:= -\frac13 (o^-(C') - o^-(C)) \\
\Delta l &:= l(C') - l(C) &                   \Delta l^- &:= -(l^-(C') - l^-(C)) \\
\Delta ll &:= \frac13 (ll(C') - ll(C)) &      \Delta ll^- &:= -\frac13 (ll^-(C') - ll^-(C))
\end{align*}

\begin{observation}
Values of $\Delta a$, $\Delta a^-$, $\Delta l$ and $\Delta l^-$ are among $\{0,1,-1\}$. 
Values of $\Delta b$, $\Delta b^-$, $\Delta o$, $\Delta o^-$, $\Delta ll$ and $\Delta ll^-$ are among 
$\{0, 1/3, -1/3\}$.
\end{observation}
\begin{proof}
The only nontrivial part is $\Delta b \leq 1/3$ and symmetrically $\Delta b^- \leq 1/3$. For contradiction 
suppose $\Delta b > 1/3$, which can only happen when one block was split to two free pancakes and another 
block became surface in a single flip. But the higher of the two pancakes that formed the split block
will end on top of the stack after the flip. Therefore no block became surface. 
To show $\Delta b^- \leq 1/3$ we consider the flip $\phi: -C' \rightarrow -C$, for which
\[
\frac13 \geq \Delta_{\phi}b = -\frac13 (b(-C)-b(-C')) = -\frac13 (b^-(C)-b^-(C')) = \frac13 (b^-(C')-b^-(C)) = \Delta b^-.
\]


\end{proof}

{\ }\par
The proof of the lemma is based on restricting possible combinations of values of the 
above defined functions.
\begin{itemize}

\item
Both $\Delta l$ and $\Delta l^-$ are positive. This would require the pancake $n$ to be before and after
the flip at the bottom of the stack each time with a different orientation. But this is not possible 
when $n > 1$.

\item
Exactly one of $\Delta l$ and $\Delta l^-$ is positive. The case $\Delta l^- > 0$ can be transformed to 
the case $\Delta l > 0$ by considering the flip $\phi: -C' \rightarrow -C$, for which 
\begin{align*}
 \Delta_{\phi} v &:= v(-C)-v(-C') = -v(C)-(-v(C')) = v(C')-v(C) = \Delta v, \\
 \Delta_{\phi} l &:= l(-C)-l(-C') = l^-(C)-l^-(C') = -(l^-(C')-l^-(C)) = \Delta l^- ,\\
 \Delta_{\phi} l^- &:= l^-(-C)-l^-(-C') = \Delta l.
\end{align*}
The equality $v(-C) = -v(C)$ follows from the definition of $v(C)$.

If the value of $l$ changes, the flip must be an $n$-flip. Therefore $\Delta a = \Delta a^- = 0$.
Because $\Delta l = 1$, the pancake $\bsd n$ has to be at the bottom of the stack after the flip, so 
$\Delta ll^- = 0$. Moreover neither a clan nor the pancake $\bsd 1$ could be on top of the stack 
before the flip so $\Delta b^- \leq 0$ and $\Delta o^- \leq 0$.
Because $\Delta ll = 1/3$ implies a block on top of the stack before the flip and $\Delta o = 1/3$
implies no block on top of the stack after the flip, we obtain
\begin{align*}
\Delta ll = \frac13 ~\&~ \Delta o \leq 0 &\Rightarrow \Delta b \leq 0, \\
\Delta ll \leq 0 ~\&~ \Delta o = \frac13 &\Rightarrow \Delta b \leq 0, \\
\Delta ll = \frac13 ~\&~ \Delta o =\frac13 &\Rightarrow \Delta b \leq -\frac13. 
\end{align*}
In any of the cases $\Delta ll + \Delta o + \Delta b \leq 1/3$ and $\Delta v \leq 4/3$.

From now on, we can assume $\Delta l, \Delta l^- \leq 0$. 

\item
At least one of $\Delta ll$ and $\Delta ll^-$ is positive. If both of them were positive then again 
the pancake $n$ would be at the bottom of the stack before and after the flip, each time with 
a different orientation. Similarly to the previous case, we can choose $\Delta ll^-=0$ and 
$\Delta ll = 1/3$. 
Because $\Delta l \leq 0$, the last flip was an $(n-1)$-flip, 
the pancake at the bottom of the stack is $\bsd n$ and the pancake on top of the stack 
before the flip was $\bsu{(n-1)}$. Therefore $\Delta a = 1$, $\Delta a^- = 0$, 
$\Delta o^- \leq 0$ and $\Delta b^- \leq 0$. 

If pancake $n-1$ was part of a block before the flip, then this block became deep, otherwise 
pancakes $n-1$ and $n$ created a new deep block. Thus $\Delta b \leq 0$.
No block was destroyed and if $\Delta o = 1/3$, then no block became surface and thus 
$\Delta b = -1/3$. All in all $\Delta v \leq 4/3$.

In the remaining cases we have $\Delta l,~\Delta l^-,~\Delta ll,~\Delta ll^- \leq 0$. 

\item
Both $\Delta o$ and $\Delta o^-$ are positive. 
Because $\Delta o^- > 0$ then either 1 was in a clan or on top of the stack with burnt side down before the flip.
If 1 was in a clan, then a single flip would not make it either a part of a block or a free $\bsu 1$ 
on top of the stack and thus $\Delta o$ would not be positive.
Using a similar reasoning for $\Delta o$, we obtain that the flip was a 1-flip, the topmost pancake before 
the flip was $\bsd 1$ and the second pancake from top is different from $2$. 
Thus $\Delta a = \Delta a^- = \Delta b = \Delta b^- = 0$ and $\Delta v \leq 2/3$.

\item
Exactly one of $\Delta o$ and $\Delta o^-$ is positive; without loss of generality it is $\Delta o$. 
This can happen only in two ways.

\begin{itemize}
\item
We did an $i$-flip, the topmost pancake before the flip was $\bsd 2$ and the $(i+1)$-st pancake 
is $\bsu 1$. Then $\Delta a = 1$, $\Delta a^- = 0$, $\Delta b \leq 0$ and $\Delta b^- \leq 0$ and so 
$\Delta v \leq 4/3$.
\item
We did an $i$-flip, the $i$-th pancake before the flip was $\bsd 1$ and neither the $(i-1)$-st nor the 
$(i+1)$-st pancake was $\bsd 2$. Then $\Delta b \leq 0$ and $\Delta a^- \leq 0$. If $\Delta a \leq 0$, 
then $\Delta v \leq 2/3$, otherwise $\Delta b^- \leq 0$ and $\Delta v \leq 4/3$.
\end{itemize}

Now only $\Delta a, \Delta a^-, \Delta b$ and $\Delta b^-$ can be positive.

\item
If $\Delta a = \Delta a^- = 1$, then the flip was either
\[ 
\l(
\begin{array}{c}
	\bsu{i-1} \\ \vdots \\ \bsd{i+1} \\ \bsd{i} \\ \vdots
\end{array}
\r)
\rightarrow
\l(
\begin{array}{c}
	\bsu{i+1} \\ \vdots \\ \bsd{i-1} \\ \bsd{i} \\ \vdots
\end{array}
\r)
\text{\qquad, or \qquad}
\l(
\begin{array}{c}
	\bsd{i+1} \\ \vdots \\ \bsu{i-1} \\ \bsu{i} \\ \vdots
\end{array}
\r)
\rightarrow
\l(
\begin{array}{c}
	\bsd{i-1} \\ \vdots \\ \bsu{i+1} \\ \bsu{i} \\ \vdots
\end{array}
\r).
\]

In both cases the topmost pancake before the flip was not part of a clan and the topmost pancake after the 
flip is not part of a block, so the number of deep blocks increased and the number of deep clans
decreased and $\Delta v \leq 4/3$.

\item
Exactly one of $\Delta a$ and $\Delta a^-$ is positive; without loss of generality $\Delta a = 1$, $\Delta a^- \leq 0$.
Neither a new clan was created, nor became deep, so $\Delta b^- \leq 0$ and $\Delta v \leq 4/3$.

\item
None of $\Delta a$ and $\Delta a^-$ is positive, so $\Delta v \leq 2/3$.

\end{itemize}
\end{proof}

It is easy to compute that $v(I_n)=n+2/3$ and $v(-I_n)=-n-2/3$ and thus 
the number of flips needed to transform $-I_n$ to $I_n$ is at least 
\[
\l\lceil \frac34 \l(v(I_n) - v(-I_n)\r) \r\rceil =
\l\lceil \frac34 \l(2n+\frac43\r)\r\rceil = \l\lceil \frac32n + 1\r\rceil =
\l\lfloor \frac{3(n + 1)}{2}\r\rfloor
.
\]

\end{proof}

\begin{corollary}
\label{cor:cbexact}
For all integers $n\geq 15$ with $n \equiv 3 \pmod 4$,
\[
g(-I_n) = \l\lfloor \frac{3(n + 1)}{2}\r\rfloor.
\]
\end{corollary}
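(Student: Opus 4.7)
The plan is essentially to combine the lower bound already established with known or computed upper bounds, so the corollary reduces to a matching of two inequalities case by case.

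First, Theorem~\ref{thm:blb} gives directly
\[
g(-I_n) \geq \l\lfloor \frac{3(n+1)}{2}\r\rfloor
\]
for every $n$, so only the matching upper bound needs to be supplied. The strategy is to split the range $\{n \geq 15 : n \equiv 3 \pmod 4\}$ into the asymptotic part $n \geq 23$ and the two exceptional small values $n \in \{15, 19\}$, each handled by citing a different source.

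For $n \geq 23$ with $n \equiv 3 \pmod 4$, I would invoke the explicit sorting algorithm of Heydari and Sudborough~\cite{HeydariSudb}, which sorts $-I_n$ in exactly $3(n+1)/2$ flips under precisely this congruence condition (as recalled in the introduction). Since $n+1$ is even in this range, $3(n+1)/2$ already equals $\lfloor 3(n+1)/2\rfloor$, and matching with the lower bound yields equality.

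For the remaining two cases $n = 15$ and $n = 19$, the upper bound is not covered by Heydari–Sudborough (their construction requires $n \geq 23$). Here I would cite the exact computational values listed in Table~\ref{table:val}, namely $g(-I_{15}) = 24 = \lfloor 3 \cdot 16/2\rfloor$ and $g(-I_{19}) = 30 = \lfloor 3 \cdot 20/2\rfloor$, both obtained in Section~\ref{sec:comp}. These two numerical coincidences, together with Theorem~\ref{thm:blb}, close the small-$n$ gap.

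There is essentially no obstacle: all the work has been done elsewhere. The only thing to be careful about is bookkeeping, namely verifying that the Heydari–Sudborough construction really does apply for every $n \equiv 3 \pmod 4$ with $n \geq 23$ and that the floor simplifies correctly in this congruence class, and that the two small exceptional cases are exactly the ones not covered by the general construction. Once this is checked, the corollary follows immediately by combining the cited upper bounds with the lower bound from Theorem~\ref{thm:blb}.
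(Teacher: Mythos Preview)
Your proposal is correct and follows essentially the same approach as the paper's own proof: the lower bound from Theorem~\ref{thm:blb}, the upper bound from Heydari--Sudborough for $n \geq 23$, and the two remaining cases $n=15,19$ handled by exact computed values. One minor correction on attribution: the value $g(-I_{15})=24$ was already computed by Cohen and Blum~\cite{CohenBlum}, not in Section~\ref{sec:comp}; only $g(-I_{19})=30$ is established in this paper.
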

\begin{proof}
The lower bound comes from Theorem~\ref{thm:blb}. For all $n\geq 23$ with $n \equiv 3 \pmod 4$, 
the upper bound was proved by Heydari and Sudborough~\cite{HeydariSudb}. The exact value 
for $n=15$ was computed by Cohen and Blum~\cite{CohenBlum} and the exact value for $n=19$ 
is computed in Section~\ref{sec:comp}.
\end{proof}

\section{Algorithm for the burnt version}
\label{sec:avb}

In this section we will design an algorithm that sorts burnt pancakes with small average number of flips.

First we will show a lower bound on the average number of flips of any algorithm that sorts a stack of
$n$ burnt pancakes.

\begin{theorem}
\label{thm:avgblb}
Let $av_{opt}(n)$ be the average number of flips of the optimal algorithm for sorting a stack of $n$ 
burnt pancakes. For any $n\geq 16$
\[ av_{opt}(n) \geq n + \frac{n}{16\log_2 n} - \frac32. \]
\end{theorem}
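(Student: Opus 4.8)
The plan is to lower-bound the average number of flips by a counting/entropy argument: since each flip has few possible ``outcomes'' in some suitable sense, most stacks must be far from sorted. Concretely, I would fix the target $I_n$ and think of running the optimal algorithm backwards; what we really want is to show that the number of stacks reachable from $I_n$ in at most $k$ flips grows slowly, so that for $k$ somewhat below $n$ only a vanishing fraction of the $2^n n!$ stacks are reached. Each flip applied to a fixed stack produces one of $n$ new stacks (the $i$-flips for $1\le i\le n$), so the ball of radius $k$ around $I_n$ has size at most $1 + n + n^2 + \dots + n^k \le n^{k+1}$. This crude bound already gives $av_{opt}(n)\ge n - O(1)$ once one notes $\log_2(2^n n!) = n + n\log_2 n - O(n)$ and does the averaging (stacks outside the ball of radius $k$ cost more than $k$); but it is too weak for the extra $+\,\Omega(n/\log n)$ term, so the argument has to be sharpened.

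The sharpening I would pursue uses the value function $v$ and Lemma~\ref{lem:blb} from Section~\ref{sec:lb}. Lemma~\ref{lem:blb} says a single flip changes $v$ by at most $4/3$, and $v(I_n)=n+2/3$. Hence any stack $C$ with $v(C)$ very negative needs many flips; in particular the average of $\lceil \tfrac34(v(I_n)-v(C))\rceil$ over all stacks is a lower bound on $av_{opt}(n)$, and $\tfrac34(v(I_n)-v(C)) \approx \tfrac34 v(I_n) - \tfrac34 v(C)$. The key computation is therefore $\mathbb{E}_C[v(C)]$ over a uniformly random burnt stack. The dominant, order-$\Theta(1)$-per-pancake contributions to $v$ come from $a(C)-a^-(C)$ (adjacencies minus anti-adjacencies); in a random stack the expected number of adjacencies and the expected number of anti-adjacencies are both $O(1)$, and by symmetry the whole of $v(C)$ has expectation $O(1)$. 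So the $v$-based bound alone only recovers $\tfrac34 v(I_n) = \tfrac34 n + O(1)$, which is worse than the trivial bound — the $v$-argument is good for worst case, not average case.

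So the real argument must combine a counting bound with a finer classification of flips by how much they can help. The idea I would carry out: call a flip applied to $C$ \emph{efficient} if it decreases (the relevant potential, or equivalently makes progress toward $I_n$) by the maximum amount, and note there are very few efficient flips available from a typical stack — roughly, to create an adjacency or fix the bottom pancake you need the right pancake already in the right place, which happens for only $O(1)$ choices of $i$ on average, whereas ``wasteful'' flips number $\Theta(n)$. Then I would bound the number of stacks from which the algorithm can afford a long run of efficient flips: if an optimal solution of length $k$ uses $t$ efficient flips and $k-t$ others, then since non-efficient flips make no progress we need $t \ge$ (total progress needed) $\approx n$, while the number of stacks with such a short ``efficient skeleton'' is at most $\binom{k}{t} n^{k-t} \cdot (\text{small})^{t}$. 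Choosing the threshold $k = n + \tfrac{n}{16\log_2 n} - \tfrac32$ and comparing $\log_2$ of this count with $\log_2(2^n n!) = n\log_2 n + n\log_2 2 - \Theta(n)$, the fraction of stacks solvable within $k$ flips is shown to be less than $1/2$ (say), whence at least half the stacks cost $\ge k+1 \ge k$, giving $av_{opt}(n)\ge k = n + \tfrac{n}{16\log_2 n} - \tfrac32$.

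The main obstacle is the middle step: making precise the notion of an ``efficient'' flip and proving that from almost every stack there are only $o(n/\log n)$ of them available, then turning ``few efficient flips available at each step'' into a clean bound on the number of length-$k$ solution sequences. One has to be careful that efficiency is a property of the current stack, not fixed in advance, so the counting is over branching processes rather than over subsets of a fixed index set; I would handle this by charging each step either $\log_2 n$ bits (generic flip) or $\log_2(\text{few})$ bits (efficient flip) and bounding the total bit-count of a solution by $k\log_2 n - t(\log_2 n - O(1))$, then using $t\gtrsim n$ to beat $n\log_2 n$. Getting the constant $16$ (rather than just $\Omega$) is then a matter of tracking these $O(1)$'s honestly.
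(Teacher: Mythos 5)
Your high-level plan---count solution sequences, distinguish adjacency-creating (``efficient'') flips from generic ones, fix their positions with a binomial coefficient, and bound per-step choices---is genuinely the same combinatorial skeleton as the paper's argument, which works with ``creating sequences'' from $I_n$ and their ``split-flips.'' But there is a real gap in your step-cost accounting, and it is fatal to the constant.

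You charge an efficient flip ``$\log_2(\mathrm{few})=O(1)$'' bits on the grounds that from a typical stack at most $O(1)$ (in fact at most one, in the burnt case) flips create a new adjacency. That is true going \emph{forward} from a known stack, but a forward decoder does not know the stack, so it cannot recognize which flip that $O(1)$ bits is supposed to designate. The only direction in which decoding is unambiguous is backward from $I_n$ (equivalently the paper's creating sequences); but backward, an ``efficient'' step must \emph{remove} one of the currently present adjacencies, and when you start at $I_n$ there are $n-1$ of them. The $i$-th such step has up to $n-i$ legal choices, so the total cost of the $n-1$ efficient steps is $\log_2((n-1)!)\approx (n-1)\log_2 n - n\log_2 e$, a savings of only about $1.44n$ bits over $(n-1)\log_2 n$, not the $\Theta(n\log n)$-bit savings your $O(1)$-per-step charge would give. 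This is not a cosmetic slip: if your accounting were valid it would force roughly half the stacks to need $\ge 2n - O(n/\log n)$ flips, which contradicts the known worst-case bound $g(n)\le 2n-6$ for $n\ge 16$. The paper chooses the threshold $n-1+n/(4\log_2 n)$ precisely so that the $\binom{k}{k-n+1}(n+1)^{k-n+1}$ overhead plus $(n-1)!$ stays below $\tfrac14 n!\,2^n$; your threshold cannot survive the corrected accounting without that choice.

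Two smaller points. First, the final form $n-\tfrac32+\tfrac{n}{16\log_2 n}$ comes from an averaging split you do not carry out: the paper separates a half containing all stacks with at least one adjacency (these still need $\ge n-2$ flips on average since the expected number of adjacencies is $<1$) from the adjacency-free half, and then splits the latter again by whether the sorting length exceeds the threshold; you would need some such bookkeeping to turn ``a constant fraction of stacks need $\ge k$ flips'' into the stated average bound. Second, your warm-up claim that the crude ball-of-radius-$k$ bound already gives $av_{opt}(n)\ge n-O(1)$ is an overstatement: since $\log_2(2^n n!) = n\log_2 n - (\log_2 e - 1)n + O(\log n)$, that bound only yields $av_{opt}(n) \ge n - \Theta(n/\log n)$, which is weaker, not stronger, than $n-O(1)$.
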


\begin{proof}
We will first count the expected number of adjacencies in a stack of $n$ burnt pancakes. 
A stack has $n-1$ pairs of pancakes on consecutive positions.
For each such pair of pancakes, there are $4 n (n-1)$ equally probable combinations of 
their values and orientations and the pancakes form an adjacency in exactly $2(n-1)$ of them.
From the linearity of expectation
\[\mathbb E[adj] = (n-1)\frac{1}{2n} = \frac12 \frac{n-1}{n}. \] 
Therefore at least half of the stacks have no adjacency.

\begin{itemize}
\item
First we take a half of the stacks such, that it contains all the stacks which have some adjacency.
The stacks of this half have less than 1 adjacency on average.
Each flip creates at most one adjacency, therefore when we want to obtain the stack $I_n$ with $n-1$ adjacencies,
we need at least $n-2$ flips on average.

\item
The other half contains $n! \cdot 2^{n-1}$ stacks each with no adjacency, 
thus requiring at least $n-1$ flips. For each stack we take one of the shortest sequences of flips 
that create the stack from $I_n$ and call it the \emph{creating sequence} of the stack.
Note that creating sequences of two different stacks are different.
We will now count the number of different creating sequences of length at most 
$n-1+n/(4\log_2 n)$, which will give an upper bound on the 
number of stacks with no adjacency that can be sorted in $n-1+n/(4\log_2 n)$ flips. 
Shorter creating sequences will be followed by several 0-flips, therefore we will consider 
$n+1$ possible flips. A \emph{split-flip} is a flip in a creating sequence that decreases the number 
of adjacencies to a value smaller than the lowest value obtained before the flip. Therefore 
there are exactly $n-1$ split-flips in each of our creating sequences. In a creating sequence, 
the $i$-th split-flip removes one of $n-i$ existing adjacencies and therefore there are 
$n-i$ possibilities how to make the $i$-th split-flip. The number of different creating 
sequences of the above given length is at most 
\begin{align*}
& \binom{n-1+\frac{n}{4\log_2 n}}{\frac{n}{4\log_2 n}}\cdot(n-1)! \cdot (n+1)^{n/(4\log_2 n)} \\
& \leq \l({n-1+\frac{n}{4\log_2 n}}\r)^{n/(4\log_2 n)} \cdot (n-1)! \cdot (2n)^{n/(4\log_2 n)} \\
& \leq (n-1)! \cdot (2n)^{n/(4\log_2 n)} \cdot (2n)^{n/(4\log_2 n)} \\
& \leq (n-1)! \cdot \l(n^{5/4}\r)^{2n/(4\log_2 n)} \\
& \leq (n-1)! \cdot 2^{5n/8} \\
& < \frac14 n! \cdot 2^n.
\end{align*}
Thus at least half of the stacks with no adjacency need more than $n-1+n/(4\log_2 n)$ flips while 
the rest needs at least $n-1$ flips. Therefore in this case the average number of flips is at least
\[
n-1+\frac{n}{8\log_2 n}.
\]
\end{itemize}

The overall average number of flips is then
\[
av_{opt}(n) \geq n - \frac32 + \frac{n}{16\log_2 n}.
\]

\end{proof}

\begin{theorem}
\label{thm:balgo}
There exists an algorithm that sorts a stack of $n$ burnt pancakes with the average number of flips at most
\[
\frac74 n + 5.
\]
\end{theorem}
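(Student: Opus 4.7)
The plan is to describe a deterministic algorithm that places the pancakes $n,n-1,\ldots$ at the bottom of the stack in turn, and to bound its expected number of flips by averaging over a uniformly random input. I would maintain, after $k$ rounds, the invariant that the bottom $k$ positions contain $\bsd{n-k+1},\ldots,\bsd{n}$ in order, and that the top $m:=n-k$ pancakes form a uniformly random burnt stack of size $m$. This invariant is preserved because the flips performed in each round depend only on the identities and orientations of the pancakes currently in the unsorted prefix; the map sending the prefix entering a round to the prefix leaving it is a bijection on the set of burnt stacks of the appropriate size, so uniformity is transported through the rounds and each round's expected cost can be computed independently.

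In the round with $m$ pancakes left, the action is determined by a case analysis on the position and orientation of pancake $m$, refined by a look at pancake $m-1$. The naive ``bring $m$ to the top with burnt-side up, then do an $m$-flip'' strategy costs $2$ flips in the common cases but needs $3$ flips whenever $m$ sits at the bottom in the wrong orientation or in the middle with burnt-side up; a direct count over the $2m$ equiprobable placements of $m$ yields expected cost $\tfrac52-\tfrac{2}{m}$, which is too large. I would refine the algorithm in the unlucky $3$-flip cases so that the same three flips additionally seat pancake $m-1$ just above $\bsd m$, turning one $3$-flip round into a placement of two pancakes. An elementary case analysis over the $2m(2m-2)$ equiprobable joint configurations of $(m-1,m)$ is then used to show that this piggy-backing is available often enough to bring the amortized cost down from $\tfrac52$ to at most $\tfrac74$ flips per pancake.

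The main obstacle will be the bookkeeping. One has to enumerate the relevant local configurations of the top of the prefix, exhibit in each case a concrete short flip sequence that either places $m$ alone at cost at most $2$ or places both $m$ and $m-1$ at cost at most $3$, and verify that the conditional probabilities under the uniform distribution on the remaining prefix combine to give exactly the claimed $\tfrac74$ average. A secondary technical point is to confirm that the algorithm's flips really do depend only on the unsorted prefix (so that the distributional invariant above is preserved), which is where the choice of flipping only within the top $m$ positions matters. Boundary rounds where $m$ is too small to invoke the generic analysis are absorbed into the additive constant, producing the $+5$ term.
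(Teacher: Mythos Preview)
Your selection-sort skeleton with the refinement ``in the $3$-flip rounds, use the same three flips to also seat $m{-}1$'' does not reach $\tfrac{7}{4}n$. The piggy-backing claim is false. Take $m=4$ and the prefix $(a,\bsu{4},b,\bsd{3})$. Working backwards, the stacks one flip away from a target $(\ast,\ast,\bsd{3},\bsd{4})$ have one of the shapes $(\ast,\ast,\bsd{3},\bsd{4})$, $(\bsu{3},\ast,\ast,\bsd{4})$, $(\bsu{4},\bsu{3},\ast,\ast)$; enumerating the stacks within two flips of the start shows none of them match, so no three flips suffice. The same holds for $(\bsu{3},\bsu{4},a,b)$, $(a,\bsu{4},\bsd{3},b)$, and $(a,\bsu{4},b,\bsu{3})$: with $\bsu{4}$ at position $2$, piggy-backing succeeds for only $2$ of the $6$ placements of pancake $3$. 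But your amortisation needs success in at least a $\tfrac{6}{7}$ fraction of the $3$-flip cases, since the asymptotic cost per pancake is $5/(2+q)$ when piggy-backing succeeds with probability $q$. The obstruction persists for larger $m$: when $\bsu{m}$ sits strictly in the interior, getting $\bsd{m}$ to position $m$ already forces an $i$-flip, a $1$-flip and an $m$-flip in some order, leaving no freedom to route $m{-}1$.

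There is also a structural gap in the invariant. Once some rounds place one pancake and others place two, the round map does not go between sets of equal cardinality, so calling it a bijection is meaningless; you would have to show separately that each branch hits every output stack the same number of times, and you do not.

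The paper's argument is organised quite differently. It works from the top: after a cyclic relabelling that puts pancake $2$ on top, each iteration creates exactly one new adjacency (with $1$ or $3$) and contracts it, so the stack shrinks by exactly one and the bijection onto $\mathbb C_{n-1}$ is genuine. Half the time the adjacency costs one flip; otherwise an eight-way case split on the positions and orientations of $1$ and $3$ costs $2$ flips in six subcases and $4$ in two (the extra two flips in those subcases are there only to make the map onto $\mathbb C_{n-1}$ bijective), averaging $\tfrac{5}{2}$. Hence $av(n)=av(n-1)+\tfrac{1}{2}\cdot 1+\tfrac{1}{2}\cdot\tfrac{5}{2}=av(n-1)+\tfrac{7}{4}$, and the constant comes from the base case $av(2)\le 8$.
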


\begin{proof}
Let $\mathbb C_n$ denote the set of all stacks of $n$ burnt pancakes, $h(C)$ 
will be the number of flips used by the algorithm to sort the stack $C$ and let
\begin{align*}
H(n) &:= \sum_{C \in \mathbb C_n} h(C), \\
av(n) &:= \frac{H(n)}{|\mathbb C_n|} = \frac{H(n)}{2n|\mathbb C_{n-1}|}.
\end{align*}

The algorithm will never break previously created adjacencies. This allows us to consider the adjacent
pancakes as a single burnt pancake. In each iteration of the algorithm one adjacency is created, 
the two adjacent pancakes are contracted and the size of the stack
decreases by one. We stop when the number of pancakes is two and the algorithm can transform the stack 
to the stack $(\bsd 1)$ in at most four flips.

However for the simplicity of the discussion, we will not do such a 
contraction for adjacencies already existing in the input stack (as can be seen in the proof 
of Theorem~\ref{thm:avgblb}, there are very few such adjacencies, so the benefit would be negligible).

One more simplification is used. 
Before each iteration, the algorithm looks at the topmost pancake and cyclically renumbers the pancakes 
so as to have the topmost pancake numbered $2$ --- pancake number $j$ will become $j+s+kn$, 
where $s=(2-\pi(1))$ and $k$ is an integer chosen so as to have the result inside 
the interval $\{1,\dots,n\}$. Let $\mathbb C^2_{n}$ be the set of stacks with $n$ 
burnt pancakes and the pancake number 2 on top. 
When we end up with the stack $(\bsd 1)$, we in fact have

\[ 
\l(
\begin{array}{c}
	\bsd{i} \\ \bsd{i+1} \\ \vdots \\ \bsd{n} \\ \bsd{1}\\ \bsd{2}\\ \vdots \\ \bsd{i-1}
\end{array}
\r),
\]
for some $i \in \{1,2, \dots n \}$. This stack needs at most four more flips to become $I_n$. 
Therefore $av(2) \leq 8$. We will do four flips at the end even if they are not necessary. Then 
the number of flips will not be changed by a cyclic renumbering of pancakes and 
$H(n) = n \cdot \sum_{C \in \mathbb C^2_n} h(C)$.

\begin{itemize}
\item 
If the stack from $\mathbb C^2_{n}$ can be flipped so that the topmost pancake will form an adjacency, we will do it:
\[ 
\l(
\begin{array}{c}
	\bsd{2} \\ X \\ \bsu{1} \\ Y 
\end{array}
\r)
\rightarrow
\l(
\begin{array}{c}
	\flipped X \\ \bsu{2} \\ \bsu{1} \\ Y 
\end{array}
\r)
\Leftrightarrow
\l(
\begin{array}{c}
	X' \\ \bsu{1} \\ Y '
\end{array}
\r)
\in \mathbb C_{n-1},
\]

or

\[
\l(
\begin{array}{c}
	\bsu{2} \\ X \\ \bsd{3} \\ Y 
\end{array}
\r)
\rightarrow
\l(
\begin{array}{c}
	\flipped X \\ \bsd{2} \\ \bsd{3} \\ Y 
\end{array}
\r)
\Leftrightarrow
\l(
\begin{array}{c}
	X' \\ \bsd{2} \\ Y '
\end{array}
\r)
\Leftrightarrow
\l(
\begin{array}{c}
	X'' \\ \bsd{1} \\ Y''
\end{array}
\r)
\in \mathbb C_{n-1}.
\]

Each stack from $\mathbb C_{n-1}$ appears as a result of the above described process for exactly 
one stack from $\mathbb C^2_{n}$.

\item 
If no adjacency can be created in a single flip, we will look at both pancakes $1$ and $3$ and analyze 
all possible cases. Note that this time when $2$ has its burnt side up, then $3$ has its burnt side up 
and similarly $\bsd 2$ implies $\bsd 1$.

\begin{enumerate}
\item
\[ 
\l(
\begin{array}{c}
	\bsd{2} \\ X \\ \bsd{1} \\ Y \\ \bsd{3} \\ Z
\end{array}
\r)
\rightarrow
\l(
\begin{array}{c}
	\bsu{2} \\ X \\ \bsd{1} \\ Y \\ \bsd{3} \\ Z
\end{array}
\r)
\rightarrow
\l(
\begin{array}{c}
	\flipped Y \\ \bsu{1} \\ \flipped X \\ \bsd{2} \\ \bsd{3} \\ Z
\end{array}
\r)
\Leftrightarrow
\l(
\begin{array}{c}
	Y' \\ \bsu{1} \\ X' \\ \bsd{2} \\ Z'
\end{array}
\r)
\in \mathbb C_{n-1}
\]

\item
\[ 
\l(
\begin{array}{c}
	\bsd{2} \\ X \\ \bsd{3} \\ Y \\ \bsd{1} \\ Z
\end{array}
\r)
\rightarrow
\l(
\begin{array}{c}
	\bsu{2} \\ X \\ \bsd{3} \\ Y \\ \bsd{1} \\ Z
\end{array}
\r)
\rightarrow
\l(
\begin{array}{c}
	\flipped X \\ \bsd{2}\\ \bsd{3} \\ Y  \\ \bsd{1} \\ Z
\end{array}
\r)
\Leftrightarrow
\l(
\begin{array}{c}
	X' \\ \bsd{2} \\ Y' \\ \bsd{1} \\ Z'
\end{array}
\r)
\in \mathbb C_{n-1}
\]

\item
\[ 
\l(
\begin{array}{c}
	\bsd{2} \\ X \\ \bsd{1} \\ Y \\ \bsu{3} \\ Z
\end{array}
\r)
\rightarrow
\l(
\begin{array}{c}
	\bsd{3} \\ \flipped Y \\ \bsu{1} \\ \flipped X \\ \bsu{2} \\ Z
\end{array}
\r)
\rightarrow
\l(
\begin{array}{c}
	X \\ \bsd{1}\\ Y \\ \bsu{3} \\ \bsu{2} \\ Z
\end{array}
\r)
\Leftrightarrow
\l(
\begin{array}{c}
	X' \\ \bsd{1} \\ Y' \\ \bsu{2} \\ Z'
\end{array}
\r)
\in \mathbb C_{n-1}
\]

\item
\[ 
\l(
\begin{array}{c}
	\bsd{2} \\ X \\ \bsu{3} \\ Y \\ \bsd{1} \\ Z
\end{array}
\r)
\rightarrow
\l(
\begin{array}{c}
	\bsd{3} \\ \flipped X \\ \bsu{2} \\ Y \\ \bsd{1} \\ Z
\end{array}
\r)
\rightarrow
\l(
\begin{array}{c}
	X \\ \bsu{3}\\ \bsu{2} \\ Y \\ \bsd{1} \\ Z
\end{array}
\r)
\Leftrightarrow
\l(
\begin{array}{c}
	X' \\ \bsu{2} \\ Y' \\ \bsd{1} \\ Z'
\end{array}
\r)
\in \mathbb C_{n-1}
\]

\item
\[ 
\l(
\begin{array}{c}
	\bsu{2} \\ X \\ \bsu{3} \\ Y \\ \bsd{1} \\ Z
\end{array}
\r)
\rightarrow
\l(
\begin{array}{c}
	 \bsu{1} \\ \flipped Y \\ \bsd{3} \\ \flipped X \\ \bsd{2} \\  Z
\end{array}
\r)
\rightarrow
\l(
\begin{array}{c}
	X \\ \bsu{3} \\ Y \\ \bsd{1} \\ \bsd{2} \\  Z
\end{array}
\r)
\Leftrightarrow
\l(
\begin{array}{c}
	X' \\ \bsu{2} \\ Y' \\ \bsd{1} \\ Z'
\end{array}
\r)
\rightarrow
\l(
\begin{array}{c}
	\flipped Z' \\ \bsu{1} \\ \flipped Y' \\ \bsd{2} \\ \flipped X'
\end{array}
\r)
\rightarrow
\l(
\begin{array}{c}
	Y' \\ \bsd{1} \\ Z' \\ \bsd{2} \\ \flipped X'
\end{array}
\r)
\in \mathbb C_{n-1}
\]

\item
\[ 
\l(
\begin{array}{c}
	\bsu{2} \\ X \\ \bsd{1} \\ Y \\ \bsu{3} \\ Z
\end{array}
\r)
\rightarrow
\l(
\begin{array}{c}
	 \bsu{1} \\ \flipped X \\ \bsd{2} \\  Y \\ \bsu{3} \\ Z
\end{array}
\r)
\rightarrow
\l(
\begin{array}{c}
	 X \\ \bsd{1} \\ \bsd{2} \\  Y \\ \bsu{3} \\ Z
\end{array}
\r)
\Leftrightarrow
\l(
\begin{array}{c}
	X' \\ \bsd{1} \\ Y' \\ \bsu{2} \\ Z'
\end{array}
\r)
\rightarrow
\l(
\begin{array}{c}
	 \flipped Z' \\ \bsd{2} \\ \flipped Y' \\ \bsu{1} \\ \flipped X'
\end{array}
\r)
\rightarrow
\l(
\begin{array}{c}
	Y' \\ \bsu{2} \\ Z' \\ \bsu{1} \\ \flipped X'
\end{array}
\r)
\in \mathbb C_{n-1}
\]

\item
\[ 
\l(
\begin{array}{c}
	\bsu{2} \\ X \\ \bsu{3} \\ Y \\ \bsu{1} \\ Z
\end{array}
\r)
\rightarrow
\l(
\begin{array}{c}
	\bsd{2} \\ X \\ \bsu{3} \\ Y \\ \bsu{1} \\ Z
\end{array}
\r)
\rightarrow
\l(
\begin{array}{c}
	\flipped Y \\ \bsd{3} \\ \flipped X \\ \bsu{2} \\ \bsu{1} \\ Z
\end{array}
\r)
\Leftrightarrow
\l(
\begin{array}{c}
	Y' \\ \bsd{2} \\ X' \\ \bsu{1} \\ Z'
\end{array}
\r)
\in \mathbb C_{n-1}
\]

\item
\[ 
\l(
\begin{array}{c}
	\bsu{2} \\ X \\ \bsu{1} \\ Y \\ \bsu{3} \\ Z
\end{array}
\r)
\rightarrow
\l(
\begin{array}{c}
	\bsd{2} \\ X \\ \bsu{1} \\ Y \\ \bsu{3} \\ Z
\end{array}
\r)
\rightarrow
\l(
\begin{array}{c}
	\flipped X \\ \bsu{2} \\ \bsu{1} \\ Y \\ \bsu{3} \\ Z
\end{array}
\r)
\Leftrightarrow
\l(
\begin{array}{c}
	X' \\ \bsu{1} \\ Y' \\ \bsu{2} \\ Z'
\end{array}
\r)
\in \mathbb C_{n-1}
\]

\end{enumerate}

Again each stack from $\mathbb C_{n-1}$ appears as a result of the process for exactly one stack 
from $\mathbb C^2_{n}$, but we needed two additional flips in two of the cases to ensure this.
We did four flips in a quarter of the cases and two flips in all other cases. Each
case has the same probability and hence the average number of flips is $5/2$.

\end{itemize}

All in all
\begin{align*}
H(n) &= n \cdot \l(\sum_{C \in \mathbb C_{n-1}} (h(C)+1) + \sum_{C \in \mathbb C_{n-1}} \l(h(C)+\frac52 \r) \r)
 = 2n H(n-1) + \frac72 n |C_{n-1}|, \\
av(n) &= \frac{2nH(n-1) + \frac72 n |C_{n-1}|}{2n|C_{n-1}|} = av(n-1) + \frac74 = av(2) + \frac74(n-2) \leq \frac74 n + 5.
\end{align*}
\end{proof}

\section{Randomized algorithm for the unburnt version}
\label{sec:avu}

\begin{observation}
Let $av'_{opt}(n,0)$ be the average number of flips of the optimal algorithm for sorting a stack of $n$ 
unburnt pancakes. For any positive $n$
\[ av'_{opt}(n,0) \geq n-2. \]
\end{observation}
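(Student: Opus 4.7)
The plan is to exhibit a potential $\mu$ on stacks of $n$ unburnt pancakes that equals $n$ on the sorted stack $I_n$, has expectation at most $2-1/n$ on a uniformly random stack, and cannot increase by more than $1$ under any single flip. Once those three properties are in hand, any algorithm needs at least $\mu(I_n)-\mu(\pi)$ flips on input $\pi$, and averaging over $\pi$ yields
\[
av'_{opt}(n,0) \;\geq\; n - \mathbb{E}[\mu] \;=\; n - 2 + \tfrac{1}{n} \;\geq\; n-2.
\]

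Concretely, I would take
\[
\mu(\pi) \;:=\; a(\pi) + [\pi(n) = n],
\]
where $a(\pi)$ is the number of adjacencies of $\pi$ and $[\cdot]$ is the Iverson bracket. On $I_n$ this evaluates to $(n-1)+1 = n$. For the expectation, for each of the $n-1$ pairs of consecutive positions, out of the $n(n-1)$ possible ordered pairs of distinct values exactly $2(n-1)$ yield consecutive values, so the adjacency probability is $2/n$; and $[\pi(n)=n]$ has expectation $1/n$. By linearity, $\mathbb{E}[\mu] = (n-1)\cdot 2/n + 1/n = 2 - 1/n$.

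The only nontrivial step is verifying $\Delta\mu \leq 1$ under a single flip. A $k$-flip with $k<n$ fixes position $n$, so the indicator $[\pi(n)=n]$ is unchanged; meanwhile the adjacencies strictly inside the reversed prefix and strictly inside the unchanged suffix are preserved, leaving only the junction between new positions $k$ and $k+1$ able to contribute, so $\Delta a \leq 1$. A full $n$-flip reverses the entire stack, which sends each adjacency to an adjacency (there is no junction beyond position $n$ where a new one could appear), so $\Delta a = 0$; the indicator $[\pi(n)=n]$ may jump from $0$ to $1$ in this case, but the net change is still at most $1$.

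I expect the only mild subtlety to be this last case: both summands of $\mu$ look a priori capable of increasing simultaneously during an $n$-flip. The clean resolution is the observation that a reversal of the entire stack has no right boundary where a fresh adjacency could be created, forcing $\Delta a = 0$ precisely when $[\pi(n)=n]$ is allowed to move. Everything else is routine linearity-of-expectation bookkeeping, so the argument fits cleanly into a short Observation.
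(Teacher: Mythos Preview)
Your proof is correct and is essentially the paper's own argument. The paper phrases the potential as ``number of adjacencies, additionally counting pancake $n$ at the bottom as an adjacency,'' which is exactly your $\mu(\pi)=a(\pi)+[\pi(n)=n]$; it then asserts ``each flip creates at most one adjacency'' where you have spelled out the $k<n$ versus $k=n$ case analysis that justifies this for the extended count.
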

\begin{proof}
We will now count the expected number of adjacencies in a stack of $n$ pancakes. 
For the purpose of this proof we will consider the pancake number $n$ at the bottom of the stack as 
an additional adjacency; this has probability $1/n$.
Pancakes on consecutive positions form an adjacency if their values differ by $1$; the probability of this is $2/n$.
Therefore the expected number of adjacencies is
\[\mathbb E[adj] = \frac{1}{n} + (n-1)\frac{2}{n} < 2. \]

Each flip creates at most one adjacency, therefore when we want to obtain the stack $I_n$ with $n$ adjacencies,
the average number of flips is at least $n-2$.
\end{proof}

\begin{theorem}
\label{thm:ualgo}
There exists a randomized algorithm that sorts a stack of $n$ unburnt pancakes with the average number of
flips at most
\[
\frac{17}{12}n + 9,
\]
where the average is taken both over the stacks and the random bits.
\end{theorem}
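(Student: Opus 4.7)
The plan is to mirror the iterative structure of Theorem~\ref{thm:balgo}. Let $\mathbb D^2_n$ denote the set of unburnt $n$-stacks with pancake $2$ on top, so $|\mathbb D^2_n|=(n-1)!=|\mathbb C_{n-1}|$. The algorithm cyclically renumbers the pancakes so that the topmost is labeled $2$, performs a small (constant on average) number of flips that create at least one adjacency involving pancake $2$, contracts the two newly adjacent pancakes into a single pancake, and recurses on the resulting $(n-1)$-stack. After $n-2$ iterations only two pancakes remain and can be sorted in $O(1)$ flips, giving the additive constant in the final bound.

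The target is the recursion $av'(n)\le av'(n-1)+17/12$. To justify it, I would design the randomized flip-and-contract step so that the induced map from (uniform stack in $\mathbb D^2_n$, random bits) to $\mathbb C_{n-1}$ is measure-preserving---the unburnt analogue of the ``each stack from $\mathbb C_{n-1}$ appears as a result of the process for exactly one stack from $\mathbb C^2_n$'' argument used in Theorem~\ref{thm:balgo}. Then the expected per-iteration flip count equals the expected flip count of the randomized step itself, and summing over the $n-2$ iterations yields $av'(n)\le 17n/12+O(1)$. Writing $17/12=\tfrac{7}{12}\cdot 1+\tfrac{5}{12}\cdot 2$, one natural strategy is to handle roughly $7/12$ of the (configuration, random-bit) pairs with a single flip that immediately places $1$ or $3$ adjacent to the top $2$, and the remaining $5/12$ with two flips used to restore the bijection in the degenerate configurations (analogous to the two extra-flip subcases inside Theorem~\ref{thm:balgo}). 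The random bits are used to decide between the symmetric options of creating a $\{1,2\}$ adjacency versus a $\{2,3\}$ adjacency when both are available at unit cost, which is precisely what allows the map to $\mathbb C_{n-1}$ to be balanced.

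The main obstacle is engineering the detailed case split that produces exactly $17/12$ flips per iteration while preserving the bijective correspondence with $\mathbb C_{n-1}$. Unlike the burnt version, there are no orientations to separate subcases, so the randomization must effectively take over the role that orientation parity plays in Theorem~\ref{thm:balgo}: the probabilities of the two strategic branches must be tuned so that every $C'\in\mathbb C_{n-1}$ is produced with the correct total weight, and the degenerate configurations in which a one-flip move would collide on the same $C'$ must be identified and handled with the two-flip variant. Once this case analysis is set up and the per-iteration expectation of $17/12$ is verified, the recursion combines with the $O(1)$ base case and terminal cleanup exactly as in the proof of Theorem~\ref{thm:balgo} to give $av'(n)\le\tfrac{17}{12}n+9$.
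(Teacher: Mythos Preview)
Your plan has a structural gap: it treats the iteration as a one-parameter recursion on $n$, but that is not what the contraction actually produces. When you contract two adjacent unburnt pancakes, the resulting single pancake has a distinguished orientation (the block can be pointing up or down), so it must be treated as a \emph{burnt} pancake. Hence after one step from an all-unburnt stack you are not in $\mathbb C_{n-1}$ (all burnt) nor in the unburnt $(n-1)$-stacks, but in the mixed space $\mathbb U_{n-1,1}$ of $(n-1)$-stacks with exactly one burnt pancake. Subsequent iterations live in $\mathbb U_{n-k,b}$ for varying $b$, and the per-step cost depends on whether the two pancakes you touch are burnt or unburnt (the paper gets costs $1$, $3/2$, $1$, $2$ in the four cases). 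In particular, on the very first iteration every configuration admits a one-flip adjacency, so your proposed breakdown $17/12=\tfrac{7}{12}\cdot1+\tfrac{5}{12}\cdot2$ cannot describe a single step; the constant $17/12$ arises only as an \emph{amortized} average over the whole run.

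The paper handles this by making the recursion two-parameter, writing $av'(n,b)$ as a convex combination of $av'(n-1,b+1)$, $av'(n-1,b)$, and $av'(n-1,b-1)$ with coefficients $\frac{(n-b)(n-b-1)}{n(n-1)}$, $\frac{2(n-b)b}{n(n-1)}$, $\frac{b(b-1)}{n(n-1)}$, and then exhibiting an explicit super-solution $av^+(n,b)=\tfrac{17}{12}n+\tfrac{7}{12}b-\tfrac16\frac{(n-b+1)b}{n}+9$ that dominates $av'(n,b)$ by induction. The randomness is used only to decide whether to look at pancake $1$ or $3$ when the top pancake is unburnt, so that the map into the mixed space is measure-preserving; it is not used to balance a one-flip/two-flip split as you suggest. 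Without the two-parameter state and the potential function $av^+(n,b)$, there is no way to close the induction.
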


\begin{proof}
If two pancakes become adjacent, we contract them to a single burnt pancake; its burnt side 
will be the one where the pancake with higher number was.
Therefore in the course of the algorithm, some of
the pancakes will be burnt and some unburnt. For this reason we say that two pancakes are \emph{adjacent}
if the unburnt ones of them can be oriented so that the two resulting pancakes satisfy the definition 
of adjacency for burnt pancakes.

Let $\mathbb U_{n,b}$ denote the set of all stacks of $n$ pancakes $b$ of which are burnt and
let $\mathbb U^{2}_{n,b}$ be the stacks from $\mathbb U_{n,b}$ with the pancake number 2 on top.
Let $k(C)$ be the number of flips needed by the algorithm to sort the stack $C$ and let
\begin{align*}
K(n,b) &:= \sum_{C \in \mathbb U_{n,b}} k(C), \\
av'(n,b) &:= \frac{K(n,b)}{|\mathbb U_{n,b}|}.
\end{align*}

When there are only two pancakes left, we can sort the stack in at most 4 flips.
Similarly to the burnt version, we will sometimes cyclically renumber the pancakes. After 
renumbering them back at the end, we will do 4 flips to get the sorted stack.
Therefore $av'(1,0) = av'(1,1) = 4$, $av'(2,b) \leq 8$ for any $b \in \{0,1,2\}$
and $K(n,b) = n \cdot \sum_{C \in \mathbb U^{2}_{n,b}} k(C)$.

The algorithm first cyclically renumbers the pancakes so as to have the topmost 
pancake numbered 2 thus obtaining a stack from $\mathbb U^{2}_{n,b}$. Then we look at the 
topmost pancake. If it is unburnt, we uniformly at random select whether to look at 1 or 3; 
if it is burnt and the burnt side is down, we look at 1 and in the case when the burnt side 
is up, we look at 3.

Notice that we could also look at both pancakes 1 and 3. But if we joined only two of the 
pancakes 1, 2 and 3 we would have to count the average number of flips for each combination 
not only of the number of pancakes and the number of burnt pancakes, but also of the number 
of pairs of pancakes of consecutive sizes exactly one of which is burnt. This would make 
the calculations too complicated. We could also join all three of them, but this would 
lead to a worse result.


\begin{enumerate}[I.]
\item
\label{case1}
Both the pancakes we looked at are unburnt. The set of such stacks is 
$\mathbb U^{2,\ref{case1}}_{n,b}$. Note that stacks with pancake 2 unburnt and exactly 
one of pancakes 1 and 3 unburnt belong to this set from $50\%$ --- with $50\%$ probability, 
we choose to look at the unburnt pancake. 
Let $av'_{\ref{case1}}(n,b)$ be the weighted average number of flips used by 
the algorithm to sort a stack from $\mathbb U^{2,\ref{case1}}_{n,b}$, where the weight 
is the ratio with which the stack belongs to $\mathbb U^{2,\ref{case1}}_{n,b}$.


\[
\l(
\begin{array}{c}
	2 \\ X \\ 1 \\ Y
\end{array}
\r)
\rightarrow
\l(
\begin{array}{c}
	\flipped X \\ 2 \\ 1 \\ Y
\end{array}
\r)
\Leftrightarrow
\l(
\begin{array}{c}
	X' \\ \bsu{1} \\ Y'
\end{array}
\r)
\in \mathbb U_{n-1,b+1}
\]

\[
\l(
\begin{array}{c}
	2 \\ X \\ 3 \\ Y
\end{array}
\r)
\rightarrow
\l(
\begin{array}{c}
	\flipped X \\ 2 \\ 3 \\ Y
\end{array}
\r)
\Leftrightarrow
\l(
\begin{array}{c}
	X' \\ \bsd{2} \\ Y'
\end{array}
\r)
\Leftrightarrow
\l(
\begin{array}{c}
	X'' \\ \bsd{1} \\ Y''
\end{array}
\r)
\in \mathbb U_{n-1,b+1}
\]

For each stack from $\mathbb U_{n-1,b+1}$ there are exactly $b+1$ its cyclic renumberings each appearing
as a result with a $50\%$ probability. Thus we can compute the average number of 
flips in this case:
\[
av'_{\ref{case1}}(n,b) = av'(n-1,b+1)+1 .
\]

\item
\label{case2}
The topmost pancake is unburnt, while the other pancake we looked at is burnt.
\[
\l(
\begin{array}{c}
	2 \\ X \\ \bsu{1} \\ Y
\end{array}
\r)
\rightarrow
\l(
\begin{array}{c}
	\flipped X \\ 2 \\ \bsu{1} \\ Y
\end{array}
\r)
\Leftrightarrow
\l(
\begin{array}{c}
	X' \\ \bsu{1} \\ Y'
\end{array}
\r)
\in \mathbb U_{n-1,b}
\]

\[
\l(
\begin{array}{c}
	2 \\ X \\ \bsd{1} \\ Y
\end{array}
\r)
\rightarrow
\l(
\begin{array}{c}
	\bsu{1} \\ \flipped X \\ 2 \\ Y
\end{array}
\r)
\rightarrow
\l(
\begin{array}{c}
	X \\ \bsd{1} \\ 2 \\ Y
\end{array}
\r)
\Leftrightarrow
\l(
\begin{array}{c}
	X' \\ \bsd{1} \\ Y'
\end{array}
\r)
\in \mathbb U_{n-1,b}
\]

The case when we looked at pancake $3$ is similar, so we can conclude that 
\[
av'_{\ref{case2}}(n,b) = av'(n-1,b) + \frac32 .
\]

\item
\label{case3}
The topmost pancake is burnt, while the other one we looked at is unburnt.
\[
\l(
\begin{array}{c}
	\bsu{2} \\ X \\ 3 \\ Y
\end{array}
\r)
\rightarrow
\l(
\begin{array}{c}
	\flipped X \\ \bsd{2} \\ 3 \\ Y
\end{array}
\r)
\Leftrightarrow
\l(
\begin{array}{c}
	X' \\ \bsd{2} \\ Y'
\end{array}
\r)
\Leftrightarrow
\l(
\begin{array}{c}
	X'' \\ \bsd{1} \\ Y''
\end{array}
\r)
\in \mathbb U_{n-1,b}
\]

\[
\l(
\begin{array}{c}
	\bsd{2} \\ X \\ 1 \\ Y
\end{array}
\r)
\rightarrow
\l(
\begin{array}{c}
	\flipped X \\ \bsu{2} \\ 1 \\ Y
\end{array}
\r)
\Leftrightarrow
\l(
\begin{array}{c}
	X'' \\ \bsu{1} \\ Y''
\end{array}
\r)
\in \mathbb U_{n-1,b}
\]

Each stack from $\mathbb U_{n-1,b}$ appears as a result exactly once for $b$ its cyclic renumberings. 
Therefore
\[
av'_{\ref{case3}}(n,b) = av'(n-1,b) + 1 .
\]

\item
\label{case4}
Both the pancakes we looked at are burnt.
In half of the cases the two pancakes can be joined in a single flip:

\[
\l(
\begin{array}{c}
	\bsu{2} \\ X \\ \bsd{3} \\ Y
\end{array}
\r)
\rightarrow
\l(
\begin{array}{c}
	\flipped X \\ \bsd{2} \\ \bsd{3} \\ Y
\end{array}
\r)
\Leftrightarrow
\l(
\begin{array}{c}
	X' \\ \bsd{2} \\ Y'
\end{array}
\r)
\Leftrightarrow
\l(
\begin{array}{c}
	X'' \\ \bsd{1} \\ Y''
\end{array}
\r)
\in \mathbb U_{n-1,b-1}
\]

\[
\l(
\begin{array}{c}
	\bsd{2} \\ X \\ \bsu{1} \\ Y
\end{array}
\r)
\rightarrow
\l(
\begin{array}{c}
	\flipped X \\ \bsu{2} \\ \bsu{1} \\ Y
\end{array}
\r)
\Leftrightarrow
\l(
\begin{array}{c}
	X'' \\ \bsu{1} \\ Y''
\end{array}
\r)
\in \mathbb U_{n-1,b-1}
\]

Otherwise we need three flips to join the two pancakes:

\[
\l(
\begin{array}{c}
	\bsu{2} \\ X \\ \bsu{3} \\ Y
\end{array}
\r)
\rightarrow
\l(
\begin{array}{c}
	\bsd{2} \\ X \\ \bsu{3} \\ Y
\end{array}
\r)
\rightarrow
\l(
\begin{array}{c}
	\bsd{3} \\ \flipped X \\ \bsu{2} \\ Y
\end{array}
\r)
\rightarrow
\l(
\begin{array}{c}
	X \\ \bsu{3} \\ \bsu{2} \\ Y
\end{array}
\r)
\Leftrightarrow
\l(
\begin{array}{c}
	X' \\ \bsu{2} \\ Y'
\end{array}
\r)
\Leftrightarrow
\l(
\begin{array}{c}
	X'' \\ \bsu{1} \\ Y''
\end{array}
\r)
\in \mathbb U_{n-1,b-1}
\]

\[
\l(
\begin{array}{c}
	\bsd{2} \\ X \\ \bsd{1} \\ Y
\end{array}
\r)
\rightarrow
\l(
\begin{array}{c}
	\bsu{2} \\ X \\ \bsd{1} \\ Y
\end{array}
\r)
\rightarrow
\l(
\begin{array}{c}
	\bsu{1} \\ \flipped X \\ \bsd{2} \\ Y
\end{array}
\r)
\rightarrow
\l(
\begin{array}{c}
	X \\ \bsd{1} \\ \bsd{2} \\ Y
\end{array}
\r)
\Leftrightarrow
\l(
\begin{array}{c}
	X'' \\ \bsd{1} \\ Y''
\end{array}
\r)
\in \mathbb U_{n-1,b-1}
\]

Altogether
\[
av'_{\ref{case4}}(n,b) = av'(n-1,b-1) + 2.
\]

\end{enumerate}

After summing up all the above average numbers of flips 
multiplied by their probabilities, we obtain:
\begin{itemize}
\item
For $1\leq b < n$
\begin{align*}
av'(n,b) =& 
\frac{(n-b)(n-b-1)}{n(n-1)} av'_{\ref{case1}}(n,b) + 
\frac{(n-b)b} {n(n-1)} \l( av'_{\ref{case2}}(n,b) + av'_{\ref{case3}}(n,b) \r) + \\
&+ \frac {b(b-1)}{n(n-1)}av'_{\ref{case4}}(n,b) = \\
=& \frac{(n-b)(n-b-1)}{n(n-1)} (1 + av'(n-1,b+1)) + \\
&+ 2\frac{(n-b)b} {n(n-1)} \l( \frac54 + av'(n-1,b) \r) + \frac {b(b-1)}{n(n-1)} \l( 2+av'(n-1,b-1)  \r)
.
\end{align*}
\item
For $b=0$
\[
av'(n,0) = \frac{n(n-1)}{n(n-1)} av'_{\ref{case1}}(n,0) = 1 + av'(n-1,1).
\]
\item
For $b=n$
\[
av'(n,n) = \frac {n(n-1)}{n(n-1)}av'_{\ref{case4}}(n,n)  = 2 + av'(n-1,n-1).
\]
\end{itemize}

Instead of solving these recurrent formulas, we will use them to bound $av'(n,b)$ from above by 
the following function:

\[
av^+(n,b) := \frac{17}{12}n + \frac{7}{12}b - \frac16 \frac{(n-b+1)b}{n} + 9.
\]

\begin{lemma}
For any nonnegative $n$ and $b$, such that $b$ is not greater than $n$
\[av^+(n,b)\geq av'(n,b).\]
\end{lemma}

\begin{proof}
We will use induction on the number of pancakes.

\begin{itemize}
\item
For $n=1$ we have $av'(1,b)=4$ and it is easy to verify that the lemma holds.
\item
If $b=0$, then the induction hypothesis gives
\begin{align*}
av'(n,0) &= 1 + av'(n-1,1) \leq 1 + av^+(n-1,1) = \\
&= 1 + \frac{17}{12}(n-1) + \frac{7}{12} - \frac16 \frac{n-1}{n-1} + 9 = \frac{17}{12}n + 9 = av^+(n,0).
\end{align*}

\item
For $b=n$ we get
\begin{align*}
av'(n,n) &= 2 + av'(n-1,n-1) \leq 2 + av^+(n-1,n-1) = \\
&= 2 + \frac{17}{12}(n-1) + \frac{7}{12}(n-1) - \frac16 + 9 =
 \frac{17}{12}n + \frac{7}{12}n - \frac16 + 9 = av^+(n,n).
\end{align*}

\item
In the case $1 \leq b < n$
\begin{align*}
n (n-1)&(av^+(n,b) - av'(n,b)) \\
\geq &n(n-1)av^+(n,b) - (n-b)(n-b-1) (1 + av^+(n-1,b+1)) \\
&- 2(n-b)b \l( \frac54 + av^+(n-1,b) \r) - b(b-1) \l( 2+av^+(n-1,b-1) \r) \\
= & \frac{b}{n-1}\l(\frac13 n - \frac13 b\r) > 0.
\end{align*}
\end{itemize}

\end{proof}

Therefore $av^+(n,b) \geq av'(n,b)$ and thus 
\[
av'(n,0) \leq av^+(n,0) = \frac{17}{12}n + 9.
\]

\end{proof}

\section{Computational results}
\label{sec:comp}
Computer search found the following sequence of 30 flips that sorts the stack $-I_{19}$: 
(19, 14, 7, 4, 10, 18, 6, 4, 10, 19, 14, 4, 9, 11, 8, 18, 8, 11, 9, 4, 14, 19, 10, 4, 6, 18, 10, 4, 7, 14).
Thus, using Theorem~\ref{thm:blb}, $g(-I_{19}) = 30$.

We also computed $g(-I_{20}) = 32$: From~\cite[Theorem 7]{CohenBlum}: $g(-I_{20}) \leq g(-I_{19}) + 2 = 32$.
From Theorem~\ref{thm:blb}: $g(-I_{20}) \geq 31$ and from Lemma~\ref{lem:blb} follows that if
$g(-I_{20}) = 31$, then each flip of the optimal sorting sequence increases the value of 
the function $v$ by $4/3$. But computer search revealed that starting at $-I_{20}$ we can make a sequence 
of only at most 29 such flips.

The values $f(18)=20$ and $f(19)=22$ were computed by the method of Kounoike et al.~\cite{Kounoike+2005} 
and Asai et al.~\cite{Asai+2006}. It is an improvement of the method of Heydari and Sudborough~\cite{HeydariSudb}.
Let $\mathbb U_{n}^{m}$ be the set of stacks of $n$ unburnt pancakes requiring $m$ flips to sort. 
For every stack $U \in \mathbb U_{n}^{m}$, $2$ flips always suffice to move the largest pancake to the bottom of 
the stack, obtaining stack $U'$. Since then, it never helps to move the largest pancake. Therefore $U'$ requires 
exactly the same number of flips as $U''$ obtained from $U'$ by removing the largest pancake and thus 
$U''$ requires at least $m-2$ flips.

To determine $\mathbb U_{n}^{i}$ for all $i \in \{m, m+1,\dots, f(n)\}$, it is thus enough to consider 
the set $\cup_{m'=m-2}^{f(n-1)}\mathbb U_{n-1}^{m'}$. 
In each stack from this set, we try adding the pancake number $n$ to the bottom, flipping the whole stack
and trying every possible flip. The candidate set composed of the resulting and the intermediate stacks 
contains all the stacks from $\cup_{i=m}^{f(n)}\mathbb U_{n}^{i}$. 
Now it remains to determine the value of $f(U)$ for each stack $U$ in the candidate set.
As in~\cite{Kounoike+2005} and~\cite{Asai+2006}, this is done using the A* search. 

During the A* search, we need to compute a lower bound on the number of flips needed to sort
a stack. It is counted differently then in~\cite{Kounoike+2005} and~\cite{Asai+2006}: We try all possible 
sequences of flips that create an adjacency in every flip. 
If some such sequence sorts the stack, it is optimal and we are done. Otherwise, we obtain 
a lower bound equal to the number of adjacencies that are needed to be made plus 1 
(here we count pancake $n$ at the bottom of the stack as an adjacency).

In addition, we also use a heuristic to compute 
an upper bound. If the upper bound is equal to the lower bound they give the exact number of flips.

\begin{table}[ht]
\centering
\begin{tabular}{|rrr|rrr|rrr|}
\hline \hline
$n$ & $m$ & $ |\mathbb U_{n}^{m}| $ & $n$ & $m$ & $ |\mathbb U_{n}^{m}| $ & 
$n$ & $m$ & $ |\mathbb U_{n}^{m}| $ \\ \hline
14 & 13 & 30,330,792,508  & 15 & 15 &  310,592,646,490  & 16 & 17 & 756,129,138,051 \\
14 & 14 & 20,584,311,501  & 15 & 16 &   45,016,055,055  & 16 & 18 &       4,646,117 \\
14 & 15 &  2,824,234,896  & 15 & 17 &          339,220  & 17 & 19 &      65,758,725 \\
14 & 16 &         24,974  &    &    &                   &    &    &                 \\

\hline
\end{tabular}
\caption{numbers of stacks of $n$ unburnt pancakes requiring $m$ flips to sort}
\label{tab:fnsizes}
\end{table}

Sizes of the computed sets $\mathbb U_{n}^{m}$ can be found in Table~\ref{tab:fnsizes}.
It was previously known~\cite{HeydariSudb}, that $f(18)\geq 20$ and $f(19)\geq 22$.
No candidate stack of $18$ pancakes needed $21$ flips thus $f(18)=20$. Then $f(19)=22$
because $f(19)\leq f(18)+2 = 22$.

The following modification of this method was also used to compute the values of $g(n)$ up to $n=17$. 
Again, $\mathbb C_{n}^{m}$, the set of stacks of $n$ burnt pancakes requiring $m$ flips, is determined
from the set $\cup_{m'=m-2}^{g(n-1)}\mathbb C_{n-1}^{m'}$, but in a slightly different way.
In every stack of $n$ burnt pancakes other than $-I_n$ (which must be treated separately), some two pancakes 
can be joined in two flips~\cite[Theorem 1]{CohenBlum}. We will now show that the two adjacent pancakes 
can be contracted to a single pancake, which decreases the size of the stack. The reverse process is again 
used to determine the stacks of the candidate set, which are then processed by the A* search.

\begin{lemma}
\label{lemma:bcontr}
Let $C$ be a stack of burnt pancakes with a pair $(p_1, p_2)$ of adjacent pancakes 
and let $C'$ be obtained from $C$ by contracting the two adjacent pancakes to a single pancake $p$.
Then $C$ can be sorted in exactly the same number of flips as $C'$.
\end{lemma}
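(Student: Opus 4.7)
My plan is to prove the equality $g(C) = g(C')$ by establishing both inequalities, where $g$ denotes the minimum number of flips to sort.

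For $g(C) \leq g(C')$, I simulate any sorting sequence of $C'$ on $C$ by treating $p$ as the adjacent pair $(p_1, p_2)$. Each flip in $C'$ of index $t$ lifts to a flip in $C$: of index $t$ if the cut lies above $p$, or of index $t + 1$ if $p$ sits inside the flipped portion so that the block $(p_1, p_2)$ is reversed as a unit. Throughout this simulation, $p_1$ and $p_2$ remain adjacent, and the final configuration in $C$ is exactly $I_n$, because expanding $p$ back into $(p_1, p_2)$ in the sorted stack $I_{n-1}$ yields precisely the corresponding adjacent pair in $I_n$.

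For $g(C') \leq g(C)$, I reduce to the sublemma that there exists an optimal sorting sequence for $C$ which never separates $p_1$ from $p_2$. Once such a sequence is in hand, the reverse of the above lifting projects it to a sorting sequence of $C'$ of the same length. To prove the sublemma, I take any optimal sequence for $C$. Since the pair is adjacent in both $C$ and $I_n$, every separating flip must be balanced by a rejoining flip, so if the pair is ever separated there is a first separating flip $f_i$ and a first subsequent rejoining flip $f_j$. My plan is to show by an exchange argument that the segment $f_i, \ldots, f_j$ can be replaced by a strictly shorter segment that keeps the pair together, contradicting optimality and forcing no separation at all.

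The main obstacle will be executing this exchange cleanly. Between steps $i$ and $j$, $p_1$ starts at the top of the stack while $p_2$ is separated from it, and the intermediate flips may act on them independently; I expect to handle this by case analysis based on where $p_1$ and $p_2$ sit just before the rejoining flip $f_j$ and on whether intermediate flips touch $p_2$. The key insight I intend to exploit is that if $f_i$ had been replaced by a flip of index $t_i + 1$ (moving the whole block $(p_1, p_2)$ to the top rather than splitting it), then every subsequent flip in $[i+1, j-1]$ that does not reach below $p_2$ can be rerouted to act on the block instead, while $f_i$ and $f_j$ together become a redundant pair of maneuvers that cancel out. If this analysis succeeds, two flips are saved, contradicting optimality, so an optimal sequence must never separate the pair to begin with.
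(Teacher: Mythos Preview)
Your first direction, $g(C)\le g(C')$, is correct and is exactly what the paper does: lift each flip of an optimal sequence for $C'$ to $C$ by treating $p$ as the block $(p_1,p_2)$.

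For the reverse inequality you take an unnecessary and risky detour. You reduce to the sublemma that \emph{some} optimal sequence for $C$ never separates $p_1,p_2$, but you then try to prove it by showing that \emph{every} separating optimal sequence can be strictly shortened. That stronger claim is what your exchange argument targets, and it is not clear it holds: there is no reason the flips $f_i$ and $f_j$ should ``cancel out'' when intermediate flips move $p_2$ around, and you yourself leave exactly that case to an unspecified case analysis. Even if the analysis could be pushed through, at best it would produce an equal-length non-separating sequence, not a shorter one, so the contradiction scheme breaks down.

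The paper bypasses all of this with a one-line observation: $C'$ can equally well be described as $C$ with one of the two adjacent pancakes (say $p_2$) deleted. Then take \emph{any} optimal sorting sequence for $C$, separating or not, and perform in $C'$ the flip below the same pancake at each step; a flip that in $C$ was made directly below $p_2$ is made in $C'$ below whatever pancake currently sits above $p_2$. This yields a sorting sequence for $C'$ of the same length, so $g(C')\le g(C)$ immediately. No sublemma, no exchange argument, no case analysis. Your sublemma is actually a \emph{consequence} of the lemma (project an optimal sequence for $C$ to $C'$, then lift back), not a natural stepping stone toward it.
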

\begin{proof}
If we can sort $C'$ in $m$ steps, we can sort $C$ in $m$ steps as well --- we do the flips 
below the same pancakes as in an optimal sorting sequence for $C'$. Flips in $C'$ below $p$ 
are performed below the lower of $p_1, p_2$ in $C$.

The stack $C'$ can be also obtained from $C$ by removing one of the two adjacent pancakes. 
Then we can sort $C'$ by doing the flips below the same pancakes as in a sorting sequence for $C$.
Flips in $C$ below the removed pancake are performed in $C'$ below the pancake above it.
\end{proof}


During the A* search, we compute two lower bounds and take the larger one. One lower bound is computed 
from the formula in Lemma~\ref{lem:blb}. To compute the other lower bound, we try all possible 
sequences of flips that create an adjacency in all but at most two flips. If no such sequence
sorts the stack, we obtain a lower bound equal to the number of adjacencies that are needed
to be made plus 3.

In the stacks visited during the A* search, we can contract a block to a single burnt pancake
thanks to Lemma~\ref{lemma:bcontr}. If, after the contraction of blocks, the stack has at 
most nine pancakes, we look up the exact number of flips in a table previously computed 
by a breadth-first search starting at $I_9$.

\begin{table}[ht]
\centering
\begin{tabular}{|rrr|rrr|rrr|rrr|}
\hline \hline
$n$ & $m$ & $ |\mathbb C_{n}^{m}| $ & $n$ & $m$ & $ |\mathbb C_{n}^{m}| $ & 
$n$ & $m$ & $ |\mathbb C_{n}^{m}| $ & $n$ & $m$ & $ |\mathbb C_{n}^{m}| $\\ \hline
10 & 15 & 22,703,532 & 11 & 17 &  5,928,175  & 12 & 19 &  344,884 & 13 & 21 &   15,675 \\
10 & 16 &    179,828 & 11 & 18 &     10,480  & 12 & 20 &      265 & 13 & 22 &        4 \\
10 & 17 &        523 & 11 & 19 &         36  & 12 & 21 &        1 & 14 & 23 &      122 \\
10 & 18 &          1 &    &    &             &    &    &          & 15 & 25 &        2 \\
\hline
\end{tabular}
\caption{numbers of stacks of $n$ burnt pancakes requiring $m$ flips to sort}
\label{tab:gnsizes}
\end{table}

Sizes of the computed sets $\mathbb C_{n}^{m}$ can be found in Table~\ref{tab:gnsizes}.
No stack of 16 pancakes needs 27 flips thus $g(16)=26$ because $g(-I_{16})=26$. Then $g(17)=28$
because $g(-I_{17})=28$ and $g(17)\leq g(16)+2 = 28$~\cite[Theorem 8]{CohenBlum}.

The stack obtained from $-I_n$ by flipping the topmost pancake is known as $J_n$~\cite{CohenBlum}.
Let $Y_n$ be the stack obtained from $-I_n$ by changing the orientation of the second pancake
from the bottom. The two found stacks of 15 pancakes requiring 25 flips are $J_{15}$ and $Y_{15}$
and they are the first known counterexamples to the Cohen-Blum conjecture which claimed 
that for every $n$, $-I_n$ requires the largest number of flips among all stacks of $n$ pancakes. 
However, no other $J_n$ or $Y_n$ with $n\leq 20$ is a counterexample to the conjecture.

Majority of the computations were done on computers of the CESNET METACentrum grid. Some of the 
computations also took place on computers at the Department of Applied Mathematics of Charles 
University in Prague.

Data and source codes of programs mentioned above can be downloaded from the following webpage:
\url{http://kam.mff.cuni.cz/~cibulka/pancakes}. 


\section{Conclusions}
\label{sec_concl}
Although the two algorithms presented in Sections~\ref{sec:avb}~and~\ref{sec:avu} 
have a good guaranteed average number of flips, experimental results show that both 
of them are often outperformed by the corresponding algorithms of Gates and Papadimitriou.
The average numbers of flips of the two new algorithms are very near to their upper bounds 
calculated in Theorems~\ref{thm:balgo}~and~\ref{thm:ualgo} and the averages for the 
algorithms of Gates and Papadimitriou are in Table~\ref{tab:experiment}.

We will now design one more polynomial-time algorithm for the burnt version, for which no 
guarantee of the average number of flips will be given, but its experimental results are 
close to the lower bound from Theorem~\ref{thm:avgblb}.

Call a sequence of flips, each of which creates an adjacency, a \emph{greedy sequence}. Note
that since we are in the burnt version, there is always at most one possible flip that creates a new 
adjacency. In a random stack the probability that we can join the pancake on top in a single 
flip is $50\%$, therefore starting from a random stack, we can perform a greedy sequence of 
length $\log_2 n$ with probability roughly $1/n$. The idea of the algorithm is, that whenever 
we cannot create an adjacency in a single flip, we try all $n$ possible flips and do
the one that can be followed by the longest greedy sequence.

As in the previous algorithms, two adjacent pancakes are contracted to a single pancake.
Pancakes $1$ and $n$ can create an adjacency ($1$ is viewed as $(n+1)\bmod n$). Therefore 
when the algorithm obtains the stack $(\bsd 1)$ we need at most four more flips.


In Table~\ref{tab:experiment}, $n$ is the size of a stack, $s_{GP}$ is the average 
number of flips used by the algorithm of Gates and Papadimitriou to sort a randomly 
generated stack of $n$ unburnt pancakes, $s_{GPB}$ is the average number of flips 
used by the algorithm of Gates and Papadimitriou for the burnt version and $s_N$ 
is the average number of flips of the algorithm described in this section.


\begin{table}[ht]
\centering
\begin{tabular}{|rrrrrr|}
\hline \hline
$n$    & $s_{GP}$    &  $s_{GPB}$  & $s_N$      & $n+n/\log_2 n$ & stacks generated\\ \hline
10     &  11.129     &  15.383     &  14.935    & 13.010         & 1000000\\
100    &  122.925    &  150.887    & 123.463    & 115.051        &  100000\\
1000   &  1240.949   &  1502.926   & 1127.901   & 1100.343       &   10000\\
10000  &  12408.686  &  15002.212  & 10863.502  & 10752.570      &    1000\\
100000 &  124115.000 &  150063.000 & 106608.900 & 106220.600     &      10\\
1000000& 1241263.600 & 1499875.600 & 1053866.000& 1050171.666    &       5\\
\hline
\end{tabular}
\caption{experimental results of algorithms}
\label{tab:experiment}
\end{table}

The experimental results together with Theorem~\ref{thm:avgblb} support the following conjecture.

\begin{conjecture}
The average number of flips of the optimal algorithm for sorting burnt pancakes satisfies
\[
av_{opt}(n) = n+\Theta\l(\frac{n}{\log n}\r).
\]
\end{conjecture}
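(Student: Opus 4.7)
The lower bound $av_{opt}(n) \geq n + \Omega(n/\log n)$ is already established by Theorem~\ref{thm:avgblb}, so the conjecture reduces to exhibiting an algorithm with average flip count at most $n + O(n/\log n)$. My plan is to analyze the lookahead-greedy algorithm sketched in this section: at each step, perform a flip creating a new adjacency if one is available; otherwise try each of the $n$ candidate flips and execute the one admitting the longest subsequent greedy sequence, then contract any newly formed block to a single pancake. Call a flip \emph{wasted} if it creates no new adjacency. Since sorting requires $n-1$ new adjacencies plus $O(1)$ terminal flips, the total number of flips equals $(n-1) + W + O(1)$ where $W$ counts wasted flips. The conjecture therefore reduces to showing $\mathbb{E}[W] = O(n/\log n)$.

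The central technical lemma I would aim for is: starting from a sufficiently random stack of $m$ contracted burnt pancakes, the maximum length of a greedy chain over the $m$ possible first-flip choices is $\Omega(\log m)$ in expectation. The heuristic justification is that a single greedy step succeeds with probability $\approx 1/2$, since if the top pancake is $\bsd{a}$, the unique flip that can create a seam adjacency is determined by the position of the partner pancake numbered $a-1$, and it succeeds iff that partner has the correct orientation. Thus the greedy chain length $L$ from a fixed starting configuration is stochastically dominated by a geometric random variable, giving $\Pr[L \geq k] \approx 2^{-k}$. Heuristically treating the $m$ candidate chains as near-independent yields $\mathbb{E}[\max_i L_i] = \Theta(\log_2 m)$. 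Combined with an amortization in which each wasted flip is charged the $\Omega(\log n)$ subsequent greedy flips, this gives $\mathbb{E}[W] = O(n/\log n)$.

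The main obstacle is making ``sufficiently random'' rigorous. Two difficulties stand out. First, the $m$ candidate first-flips produce highly correlated post-flip stacks (all differ from the original by a single prefix reversal), so independence fails; the analysis must use a second-moment or Chebyshev-style bound on the maximum, or carve out disjoint events corresponding to distinct partner-pancake configurations that can be shown pairwise incompatible. Second, and more serious, the distribution on stacks conditioned on the algorithm having reached a given residual size is not uniform, because both the greedy step and the lookahead choice bias the subsequent configuration. This forces an inductive invariant, for instance that the conditional distribution has Radon--Nikodym derivative bounded by a constant with respect to the uniform measure on stacks of the reduced size, preserved up to constant-factor losses by each iteration. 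A cleaner alternative would be to design a hybrid randomized algorithm that explicitly re-randomizes after every wasted flip using $O(1)$ extra flips, thereby sidestepping the distributional-invariance problem and reducing the analysis to a direct computation on uniformly random inputs. Either route leads, modulo these technical obstacles, to the matching upper bound $av_{opt}(n) \leq n + O(n/\log n)$ and hence the conjectured tight asymptotics.
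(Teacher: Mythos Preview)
The statement you are attempting to prove is labelled a \emph{Conjecture} in the paper, and the paper does not claim to prove it. The only rigorous content the paper offers toward the conjecture is the lower bound of Theorem~\ref{thm:avgblb}; for the upper bound the paper gives nothing beyond the heuristic description of the lookahead-greedy algorithm and the numerical experiments in Table~\ref{tab:experiment}. There is thus no ``paper's own proof'' against which to compare your argument.

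Your proposal is a reasonable outline of how one might try to prove the upper bound, and you have correctly identified the two real obstacles: the $n$ candidate greedy chains after a wasted flip are far from independent, and the residual stack after several iterations of the algorithm is not uniformly distributed. However, neither of your suggested workarounds is close to a proof. The second-moment idea for the maximum of correlated chains would need a concrete variance bound that you do not supply, and the ``Radon--Nikodym derivative bounded by a constant'' invariant is almost certainly false as stated: the lookahead step conditions on an event involving $\Theta(\log n)$ pancakes simultaneously, which can distort the distribution by a factor polynomial in $n$, not a constant. Your alternative of re-randomizing after each wasted flip with $O(1)$ extra flips is also unsupported---it is not clear that any bounded number of flips suffices to restore a distribution close enough to uniform for the argument to iterate. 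In short, what you have written is a plausible proof \emph{strategy} that matches the paper's informal motivation, but it remains a conjecture: the gaps you flag are genuine and, as far as is known, unresolved.
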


\subsection*{Acknowledgements}

I would like to thank Pavel Valtr and Jan Kratochv{\'i}l who led the
seminar under which this article has originated. I would also like to thank
Jan Kyn\v{c}l, Bernard Lidick{\'y}, Radovan \v{S}est{\'a}k and Marek Tesa\v{r}, 
who were participants of the seminar, for their notable comments.
I am also grateful to Pavel Valtr for suggesting to study the average
numbers of flips.

\bibliographystyle{plain}

\end{document}